\documentclass[journal]{IEEEtran}

\usepackage[T1]{fontenc}
\usepackage[utf8]{inputenc}
\usepackage{amsmath,amssymb,amsfonts}
\usepackage{bm}
\usepackage{cite}
\usepackage{graphicx}
\usepackage{algorithm}
\usepackage{algorithmic}
\usepackage{booktabs}
\usepackage{array}
\usepackage{xcolor}
\usepackage[acronym]{glossaries}
\usepackage{subfig}
\usepackage{tabularx}

\newacronym{pai}{physical AI}{physical artificial intelligence}
\newacronym{ber}{BER}{bit error rate}
\newacronym{eier}{EIER}{event-inference error ratio}
\newacronym{twi}{TWI}{temporal window of integration}
\newacronym{eir}{EIR}{event-inference reliability}
\newacronym{snr}{SNR}{signal to noise ratio}
\newacronym{sinr}{SINR}{signal to interference plus noise ratio}
\newacronym{cdf}{CDF}{cumulative distribution function}
\newacronym{bs}{BS}{base-station}
\newacronym{aoi}{AoI}{age-of-information}
\newacronym{mi}{MI}{mutual-information}
\newacronym{map}{MAP}{maximum a posteriori}
\newacronym{sr}{SR}{scheduling request}
\newacronym{pt}{PT}{packet transmission}
\newacronym{pmf}{PMF}{probability mass function}

\newtheorem{definition}{Definition}

\newtheorem{proposition}{Proposition}

\newtheorem{remark}{Remark}

\newcommand{\E}{\mathbb{E}}
\newcommand{\Prb}{\mathbb{P}}
\newcommand{\cE}{\mathcal{E}}
\newcommand{\cO}{\mathcal{O}}
\newcommand{\cC}{\mathcal{C}}

\newcommand{\bO}{\bm{O}}

\title{Event-Inference Reliability for Physical AI over Wireless Networks}

\author{Anup~Mishra,~\IEEEmembership{Member,~IEEE} and Petar~Popovski,~\IEEEmembership{Fellow,~IEEE}\vspace{-0.8cm}

\thanks{A. Mishra and P. Popovski are with the Department of Electronic Systems, Aalborg University, Denmark (\textit{email}: anmi@es.aau.dk, petarp@es.aau.dk).}
}

\begin{document}
\maketitle

\begin{abstract}
Wireless-enabled \gls{pai} systems call for a shift from reliable data delivery to reliable inference of physical events. The relevant question is not only whether packets arrive, but whether the set of cues available at the decision node, i.e., the evidence, is sufficiently timely and informative to support reliable inference about the event. Accordingly, this paper develops a framework in which \emph{\gls{eir}} is determined jointly by cue informativeness, cue availability, and temporal admissibility. The latter is determined by the downstream task requirement and represented through the \emph{usefulness horizon}. We define \gls{eier} as the normalised residual event uncertainty after incorporating admitted cues, and \gls{eir} as the corresponding normalised uncertainty reduction, both conditioned on decision-node context. We further distinguish the evidence-limited Bayes benchmark from operational performance of a particular inference engine and derive an entropy-based lower bound on the minimum achievable event error from the same decision-node information. The framework then enables an event-aware wireless design interface for cue prioritisation, cue-reliability allocation, and event-inference coverage characterisation. A multiclass indoor activity-inference study combining empirical cue likelihoods with wireless delivery instantiates the framework and demonstrates how it characterises \gls{eir} under finite usefulness horizons.
\end{abstract}

\begin{IEEEkeywords}
Physical AI, multimodal sensing, event perception, temporal window of integration, information theory.
\end{IEEEkeywords}

\glsresetall

\section{Introduction}\label{sec:introduction}
\Gls{pai} refers to intelligent systems whose perception, reasoning, and
action are coupled to the physical world in real time~\cite{miriyev2020skills,liu2024embodied,mishra2025temporal}.
Examples include autonomous vehicles, industrial automation systems, smart
infrastructure, embodied assistants, and networked sensing platforms. In such systems, intelligence depends on the ability to infer physical states or events from noisy, incomplete, and heterogeneous sensory cues that collectively constitute the evidence available for inference, and to use this inference to support timely action~\cite{Mondal2025,popovski2022perspective,hashash2026active}.
Recent active-inference views of \gls{pai}, alongside broader multimodal intelligence architectures, emphasize this belief-updating role of sensory evidence~\cite{de2026active,hashash2026active}.
In these views, an agent uses an internal model to form expectations,
interpret new observations, update its beliefs when observations deviate
from those expectations, and adapt its actions accordingly~\cite{friston2010free,friston2017active}.
This perspective explains why sensory evidence is central to physical intelligence: without timely and reliable sensory cues, the agent cannot form reliable beliefs about the underlying physical state.
\par In wireless-enabled \gls{pai}, however, such sensory cues are not
automatically available to the agent or edge decision node. They are
generated by distributed sensors, transformed through local computation or
compression, transported over wireless links, aligned with other cues,
and then used for remote inference under stringent timing
constraints~\cite{popovski2022perspective,strinati2021beyond,thomas2026compositional}.
This makes the wireless system part of the \textit{evidence interface} rather than
merely a data pipe: it determines which sensory cues reach the decision node, when they arrive, and whether they arrive in time to remain useful for event inference~\cite{mishra2025temporal}. Fig.~\ref{fig:eir_toy_example} illustrates this distinction using three
generic cues, which can be interpreted, for example, as wearable,
object-attached, and ambient cues in the indoor activity-inference setting
instantiated later in Section~\ref{sec:operationalisation_activity_event}.
Cue availability and informativeness need not align: a readily available
cue may be weakly informative, while a highly informative cue may arrive
too late to contribute to inference, showing how wireless delivery
temporally shapes the evidence.
\begin{figure}[t]
\centering
\includegraphics[width=\columnwidth]{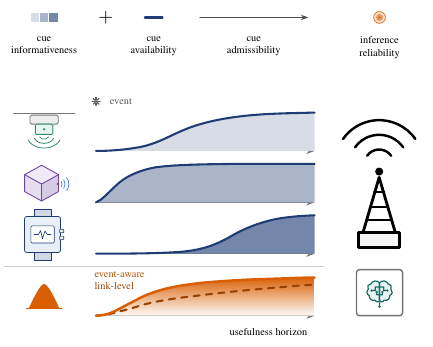}
\caption{\small Illustration of how cue informativeness, availability, and
wireless resource allocation jointly shape reliable inference.\vspace{-0.7cm}}
\label{fig:eir_toy_example}
\end{figure}

\par This evidence-interface view has complementary biological grounding
on both sides of the perception-action loop. On the inference side,
active-inference views of \gls{pai} draw on human cognition to explain how
internal models, prediction errors, and belief updates support
action~\cite{hashash2026active}. On the evidence-admission side, human
multisensory perception suggests that sensory inputs are useful only when
they are \textit{spatially congruent}, \textit{temporally compatible}, and
\textit{reliable} as evidence of a common physical
cause~\cite{kording2007causal,SPENCE2003R519}. These properties determine, respectively, whether cues relate to the same
event, fall within a plausible \gls{twi}, and how much influence
each cue should have on the resulting percept, with more reliable cues
weighted more strongly~\cite{ernst2002humans}.
\par These perceptual principles motivate a wireless admissibility
view in which received data are not automatically treated as event
evidence. The \gls{twi} framework proposed in~\cite{mishra2025temporal}
organises the timing dimension around three event-relative primitives:
\textit{simultaneity}, \textit{causality}, and \textit{usefulness}. {For each event-inference instance, an event reference time $t$ provides the
temporal anchor for these relations and for the corresponding usefulness
horizon. The resulting \gls{twi} conditions provide generic system-level
criteria for temporal admissibility across different sensing and inference
services: wireless delivery determines which cues become available, while
the \gls{twi} determines which of those cues are admissible for the current
inference instance. As illustrated in Fig.~\ref{fig:eir_toy_example},
however, temporal admissibility alone does not determine how much the
admitted evidence resolves uncertainty about the particular event being
inferred.}
\par {This paper therefore develops an \gls{eir} framework that operates on
the admitted evidence, complementing the system-level temporal admissibility
provided by the \gls{twi} by characterising reliability for a specific
event-inference task through the event prior and event-conditioned cue
statistics.} \textit{We regard event inference as reliable when the admitted
evidence, together with decision-node context, distinguishes the underlying
event from alternative event states with sufficiently low residual
uncertainty.} Accordingly, we define \gls{eier} as the fraction of the
context-conditioned event uncertainty that remains after cue admission, and
\gls{eir} as the complementary fraction resolved by the admitted cue set.
This event-level characterisation couples cue informativeness, availability,
and temporal admissibility, providing a basis for event-aware wireless
design.
\subsection{Related Work}
\label{subsec:related_work_positioning}
\par Active-inference and free-energy formulations provide an agent-level
account of perception and action through belief updating under an internal
generative model, often using variational free-energy or expected
free-energy minimisation~\cite{friston2010free,friston2017active}. Recent
\gls{pai}-oriented works extend this view to engineered physical agents and
world-model-based test-time adaptation, where sensory observations support
prediction-error detection, reasoning, and policy
adaptation~\cite{de2026active,hashash2026active}. These works establish the importance of sensory evidence for physical
intelligence, but generally abstract from the network-mediated conditions
under which distributed sensory cues become available, remain temporally
admissible, and provide sufficient information for belief updating, leaving
open how reliability should be defined at this evidence interface.
\par Wireless reliability has traditionally been characterised through
link-level metrics such as rate, outage, latency, and \gls{aoi}, reflecting
a communication paradigm centred on reliable symbol or bit
transmission~\cite{strinati2021beyond,popovski2022perspective}. Semantic and
goal-oriented communication extends this link-level focus toward the meaning,
relevance, and effectiveness of delivered information, including
semantic-aware sensing~\cite{zhu2025sansee}, task-specific semantic
protocols~\cite{seo2023semantic}, and compositional semantic communication
for heterogeneous \gls{pai} devices~\cite{thomas2026compositional}.
Complementing this shift, \cite{popovski2022perspective}
posits that application relevance also depends on timing across information
generation, processing, delivery, and reconstruction.
\par This shift toward task-relevant information is particularly important in multimodal sensing, where heterogeneous sensory cues of the same physical
process may arise from time series, images, audio, video, location traces,
and contextual records~\cite{Mondal2025}. Distributed detection and
data-fusion methods have studied the combination of local sensor reports
under noisy observations and heterogeneous sensor
reliabilities~\cite{varshney1997distributed,viswanathan1997distributed},
including spatial correlation and location-dependent
reliability~\cite{OuldAhmedVall2012}. These works support interpreting
sensor reports as evidence about an underlying physical process rather than
merely as independently delivered packets.
\par The temporal structure of this evidence must also be preserved.
Beyond link-level latency and \gls{aoi}, recent work treats timing as an
event-centric admissibility problem. The work in~\cite{Simul_Causal}
identifies chronology, simultaneity, causality, and timestamping as
challenges for perceptive wireless systems. The work in~\cite{Joao_SV}
uses \gls{twi}-based processing to preserve simultaneity under propagation,
computation, and communication delays, while the \gls{twi}
framework in~\cite{mishra2025temporal} organises timing around
\textit{simultaneity}, \textit{causality}, and \textit{usefulness}.
Together, these works shift timing from packet-level freshness toward the
temporal admissibility of evidence about physical events.
\par Existing work therefore addresses complementary layers of the evidence interface: wireless and \gls{twi}-oriented approaches determine which cues become available and remain temporally admissible, while inference and fusion approaches operate on the resulting evidence. What remains missing is a reliability measure that connects these layers at the event level, with direct implications for wireless design. The resulting design distinction is depicted in Fig.~\ref{fig:eir_toy_example}: link-level control allocates cue reliability based on delivery requirements, whereas event-aware control additionally accounts for the inferential contribution of each cue, thereby enabling more reliable inference under the same temporal constraints. To this end, an information-theoretic formulation provides the missing measure by quantifying how much the context and admitted cue set resolve uncertainty about the underlying event~\cite{gamal2026information}, motivating the event-decoding framework developed in this paper.

\subsection{Problem Statement and Contributions}
\label{subsec:problem_statement_contributions}
The discussion above leads to the following question:
\textit{given the decision-node context and the cue set admitted as per 
\gls{twi} constraints, how much uncertainty about the underlying event remains, and what
does this imply for achievable event-decision performance?}
Answering this question requires an event-level reliability formulation that
maps the admitted evidence to residual event uncertainty and event-error
performance while preserving the dependence of that evidence on wireless
delivery and control. Complementing the three-cue illustration in
Fig.~\ref{fig:eir_toy_example}, Fig.~\ref{fig:eir_inference_layer} situates
this reliability layer within the broader perception--action loop.

\begin{figure*}[t]
    \centering
    \includegraphics[width=\textwidth]{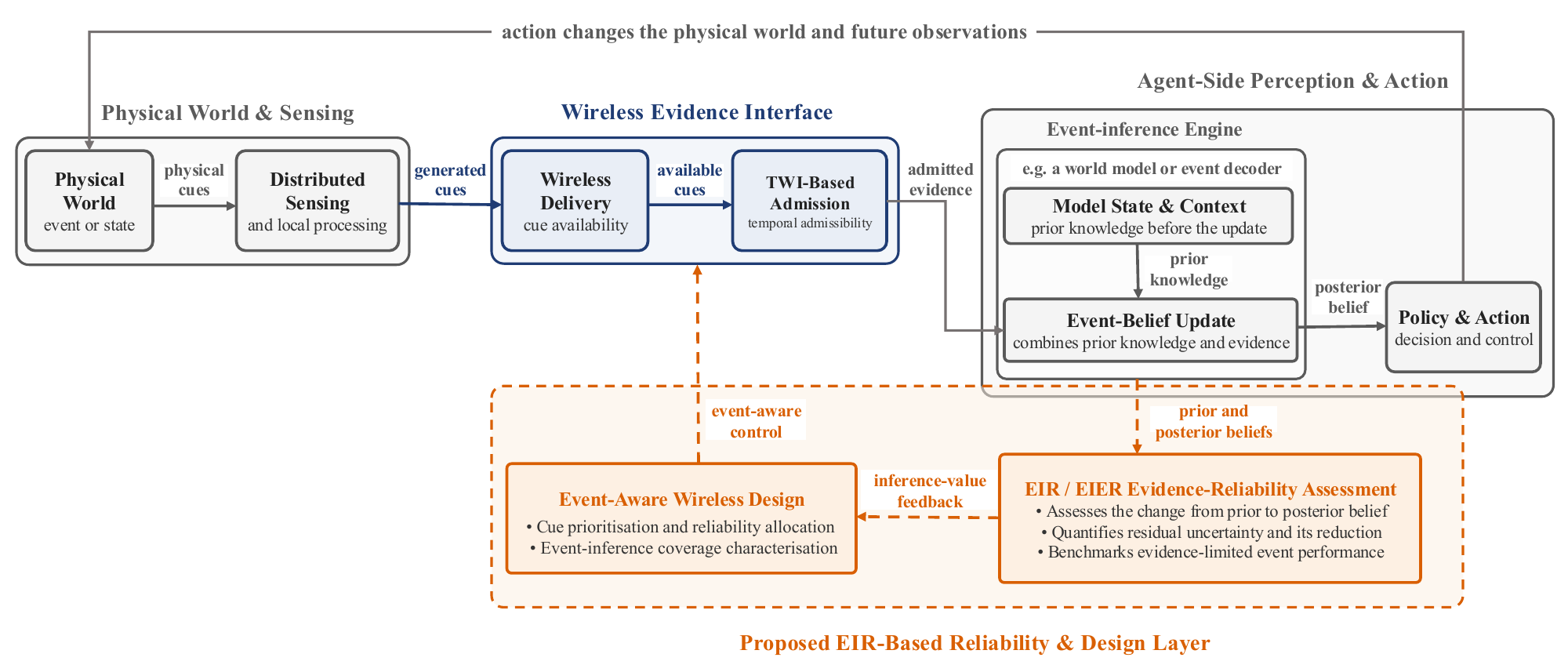}
    \caption{\small Wireless-enabled event inference within a closed
perception--action loop. Sensing, wireless delivery, and TWI-based
admission feed an event-belief update whose posterior drives policy and action, closing the outer loop; EIER/EIR assess the update, driving event-aware wireless design and closing the inner loop.\vspace{-0.5cm}}
    \label{fig:eir_inference_layer}
\end{figure*}
\par The main contributions of this paper are as follows.
\begin{itemize}
\item We formulate reliability in wireless-enabled \gls{pai} as a
context-aware event-decoding problem, in which a decision node infers an underlying physical event from its local context and the multimodal cue set
admitted within the relevant \gls{twi}. The formulation separates cue informativeness, characterised by the event-conditioned observation law; cue availability, determined by whether and when a generated cue reaches the decision node under sensing, computation, and communication delays; and temporal admissibility, determined here by whether the available cue satisfies the \gls{twi} usefulness condition.

\item We define \gls{eier} as the normalised residual uncertainty of the physical event conditioned on decision-node context and admitted evidence, and \gls{eir} as the corresponding normalised uncertainty reduction. We further distinguish the minimum event error supported by the available
evidence from additional performance loss associated with an implemented
inference engine, when such a decoder-specific gap is present.

\item We derive an entropy-based lower bound that relates residual event
uncertainty to the minimum achievable event error, providing an
information-theoretic constraint on evidence-limited event-decision
performance. Building on this connection, we develop an event-aware wireless design interface in which cue-reliability allocation accounts for the inferential contribution of different cues rather than delivery reliability alone.
\item We instantiate the framework in a \gls{twi}-gated multiclass
activity-inference task combining empirical cue likelihoods with stochastic
wireless delivery. The study evaluates the event posterior, \gls{eier},
\gls{eir}, and Bayes event error under event-relative timing constraints,
and shows that event-aware cue-reliability allocation prioritises
informative cues and attains matched event-inference coverage at shorter
usefulness horizons than link-level allocation.
\end{itemize}
\par \textit{Notation:} Random variables are denoted by uppercase letters and their realizations by lowercase letters when needed. Bold symbols denote vectors, and calligraphic symbols denote sets or alphabets. The probability and expectation operators are denoted by $\Prb(\cdot)$ and $\E[\cdot]$, respectively. Entropy and mutual information are denoted by $H(\cdot)$ and $I(\cdot;\cdot)$, with conditional forms written in the usual way. Unless otherwise stated, logarithms are taken to base $2$, so all information quantities are measured in bits.

\par \textit{Organization:} Section~\ref{sec:system_model} introduces the context-aware event-inference system model and the decision-node information available for reliability evaluation. Section~\ref{sec:eier_and_reliability} defines \gls{eier}, \gls{eir}, and the corresponding evidence-limited and decoder-specific event-error measures. Section~\ref{sec:it_bound_event_error} relates residual event uncertainty to event-error limits and develops the event-aware wireless design viewpoint. Section~\ref{sec:operationalisation_activity_event} instantiates the framework for \gls{twi}-gated multimodal activity inference under wireless cue delivery and presents the numerical evaluation. Section~\ref{sec:conclusion} concludes the paper.

\section{System Model}
\label{sec:system_model}
This section formalises the context-aware event-inference model used
throughout the paper. We consider a decision node located at the \gls{bs}
that infers a task-relevant physical event from prior contextual knowledge
and multimodal cues generated by a distributed sensing
process~\cite{Mondal2025,OuldAhmedVall2012}. Cue delivery and the associated
end-to-end timing determine which generated cues become available and
temporally admissible under the relevant \gls{twi}
constraints~\cite{mishra2025temporal,Joao_SV}. The resulting admitted cue
set constitutes the evidence used for event inference and reliability
evaluation.

\subsection{Physical Event, Decision-Node Context, and Prior}
\label{subsec:physical_event_context_prior}
Each event-inference instance is associated with an event reference time
$t$, which provides the temporal anchor for the event-belief update.
Depending on the application, $t$ may correspond to the occurrence or
onset of a task-relevant event, an estimate thereof when that instant is
not directly observable, or an application-defined reference time for a
persistent physical state or recurrent inference task. Relative to $t$,
the usefulness horizon $\Delta$ specifies the interval within which
arriving cues satisfy the usefulness condition considered here, and the
event-belief update is formed by $t+\Delta$ from the resulting admitted
evidence~\cite{mishra2025temporal}.
The value of $\Delta$ is task-dependent; for example, in high-mobility
settings, delayed evidence may rapidly lose its relevance to the downstream
task. Subsequently, let
\begin{equation}
    E_t \in \cE = \{1,2,\ldots,K\}
    \label{eq:event_source}
\end{equation}
denote the physical event or state of interest associated with reference time
$t$, where $\cE$ is a finite event alphabet and $K=|\cE|$ is the number of
event classes. The alphabet is selected at the semantic resolution required
by the downstream task; for example, $E_t$ may represent a pedestrian state
in autonomous mobility, a machine condition in industrial monitoring, or an
environmental change in distributed sensing~\cite{Mondal2025,OuldAhmedVall2012}.
\par Let $C_t\in\cC$ denote the context available at the decision node
before the current cue set is incorporated. Depending on the application,
$C_t$ may include the previous event belief, a latent world-model state,
recent event history, geometry, location, task information, or the current
operating mode~\cite{liu2024embodied}. Importantly, $C_t$ excludes the
current cues whose incremental contribution is evaluated in the present
event-belief update.
\par For a realised context $C_t=c_t$, the decision node forms the
context-conditioned prior
\begin{equation}
    p_t(e\mid c_t)
    =
    \Prb(E_t=e\mid C_t=c_t),
    \quad e\in\cE ,
    \label{eq:context_prior}
\end{equation}
which represents its belief about the event before incorporating the newly
admitted cue set. The corresponding prior uncertainty for the realised
context is
\begin{equation}
    H_t(E_t\mid c_t)
    =
    -\sum_{e\in\cE}
    p_t(e\mid c_t)\log_2 p_t(e\mid c_t).
    \label{eq:realised_context_entropy}
\end{equation}
Averaging over the context distribution gives the conditional event
entropy~\cite{CoverThomas2006}
\begin{equation}
\begin{split}
    H(E_t\mid C_t)
    &=
    \E_{C_t}
    \left[
        H_t(E_t\mid C_t)
    \right]\\
    &=
    -\E_{C_t}
    \left[
        \sum_{e\in\cE}
        p_t(e\mid C_t)\log_2 p_t(e\mid C_t)
    \right].
    \label{eq:context_entropy}
\end{split}
\end{equation}
Thus, $H_t(E_t\mid c_t)$ quantifies event uncertainty for the realised
context, whereas $H(E_t\mid C_t)$ is the corresponding average across
contexts; both are measured before incorporating the current cue set, i.e., evidence.
\begin{remark}
When no explicit context is modelled, $C_t$ may be treated as constant.
The context-conditioned prior and entropy then reduce to
$p_t(e)=\Prb(E_t=e)$ and $H(E_t)$, respectively.
\end{remark}

\subsection{Multimodal Cue Generation and Informativeness}
\label{subsec:multimodal_sensing_cue_informativeness}
For the physical event or state $E_t$, the distributed sensing process
generates modality-specific cues~\cite{Mondal2025,varshney1997distributed}.
Let
\begin{equation}
    O_t^{(m)} \in \cO_m,
    \quad
    m\in\{1,\ldots,M\},
    \label{eq:modality_observation}
\end{equation}
denote the cue generated by modality $m$, where $\cO_m$ is its observation
space. The corresponding multimodal cue vector is given by
\begin{equation}
    \bO_t
    =
    \left(
        O_t^{(1)},O_t^{(2)},\ldots,O_t^{(M)}
    \right).
    \label{eq:multimodal_observation}
\end{equation}
Its event-dependent statistics are described by
\begin{equation}
    p(\bO_t\mid E_t),
    \label{eq:observation_channel}
\end{equation}
which captures sensing noise, partial observability, modality-specific
degradation, spatial configuration, and statistical dependence across
modalities~\cite{viswanathan1997distributed,OuldAhmedVall2012}. Additional
sensing-side variables may be included as conditioning variables when they
affect cue generation; $C_t$, however, denotes the decision-node context
used to form the event prior before the current cues are incorporated.
\par The observation law in \eqref{eq:observation_channel} determines cue
informativeness. A cue is informative to the extent that its
event-conditioned distributions distinguish among the event classes and
therefore reduce uncertainty about $E_t$. Conversely, if different classes
induce nearly indistinguishable cue distributions, the cue provides little
discriminative information regardless of whether it is successfully
delivered and admitted. The joint observation law further captures
statistical dependence across modalities and hence whether their information
about the event is redundant or complementary~\cite{kording2007causal}.
\par Note that informativeness is a sensing-side property of the cue statistics,
whereas wireless delivery and \gls{twi}-based admission determine whether
that information becomes part of the evidence available at the decision
node. The next subsection formalises this mapping from the sensing-side
vector $\bO_t$ to the admitted evidence used for event inference.

\subsection{\Gls{twi} Admissibility of Delivered Cues}
\label{subsec:twi_admissibility_delivered_cues}

Cues generated by the sensing process do not automatically constitute
usable evidence for the current event-belief update. Following the
\gls{twi} framework in~\cite{mishra2025temporal}, temporal admissibility is
governed by simultaneity, causality, and usefulness. In the present model,
we focus explicitly on usefulness through the horizon $\Delta$, while
simultaneity and causality are assumed to be enforced upstream. These
conditions can alternatively be incorporated as additional admission
constraints without changing the structure of the event-level reliability
framework.
\par For modality $m$, let $\Gamma_m$ denote the end-to-end delivery path
associated with cue $O_t^{(m)}$, and let $T_{\Gamma_m}$ denote its
event-relative delay~\cite{Simul_Causal,mishra2025temporal}.
Accordingly, $T_{\Gamma_m}$ is the time from the event reference time $t$ to the
availability of cue $m$ at the decision node and may include event
propagation, sensing, local computation or compression, and wireless
communication delays.
Cue $m$ can contribute to the event-belief update only if it is
successfully delivered and satisfies
\begin{equation}
    T_{\Gamma_m}\leq\Delta .
    \label{eq:usefulness_condition}
\end{equation}
The usefulness horizon $\Delta$ is defined at the event-inference level,
according to the temporal validity of the multimodal cue set for the
downstream task, rather than as a link-level communication requirement.
Once this event-level $\Delta$  is specified, it can be translated into
packet-level delivery deadlines for the individual cues, accounting for the
time consumed by sensing, computation, propagation, and other components of
their end-to-end paths. Conventional deadline-based wireless mechanisms can
then be used to enforce timely delivery.

\par For each modality $m$, we define the probability that the cue is both
successfully delivered and available within the usefulness horizon as
\begin{equation}
    r_m(\Delta)
    =
    \left(1-p_m^{\mathrm{drop}}\right)
    F_{\Gamma_m}
    \left(
        \Delta\mid\mathrm{no~drop}
    \right),
    \label{eq:path_success_timeliness_probability}
\end{equation}
where $p_m^{\mathrm{drop}}$ is the probability of delivery failure and
\begin{equation}
    F_{\Gamma_m}
    \left(
        \Delta\mid\mathrm{no~drop}
    \right)
    =
    \Prb
    \left(
        T_{\Gamma_m}\leq\Delta
        \mid\mathrm{no~drop}
    \right)
    \label{eq:conditional_delay_cdf}
\end{equation}
is the conditional \gls{cdf} of the event-relative delay. The components of
$T_{\Gamma_m}$ can be modelled using the path formulation
in~\cite[eqs.~(2)--(16)]{mishra2025temporal}.

\begin{remark}[Reference-time implementation]
When the event reference time $t$ corresponds to a physical event occurrence or onset, that instant need not be directly observable. Suppose that
a runtime estimate $\widehat{t}$ satisfies
$0\leq\widehat{t}-t\leq\delta_0$ for a known conservative margin
$\delta_0$. Enforcing
\begin{equation}
    t_m^{\mathrm{arr}}-\widehat{t}
    \leq
    \Delta-\delta_0
\end{equation}
then guarantees
$t_m^{\mathrm{arr}}-t\leq\Delta$. Thus, the usefulness condition can be
enforced conservatively even when the event reference time is not directly observed.
\end{remark}
\par Hence, $r_m(\Delta)$ gives the probability that cue $m$ passes the
delivery-and-usefulness gate and can enter the admitted cue set used for
event inference.
\subsection{Decision-Node Evidence for Event Inference}
\label{subsec:decision_node_evidence_event_inference}
The delivery and \gls{twi} conditions above determine which generated cues
enter the current event-belief update. For each modality $m$, let
\begin{equation}
    A_{t,m}^{(\Delta)}\in\{0,1\}
    \label{eq:cue_admission_indicator}
\end{equation}
denote the admission indicator, where $A_{t,m}^{(\Delta)}=1$ if cue $m$
is successfully delivered and temporally admissible by $t+\Delta$, and
$A_{t,m}^{(\Delta)}=0$ otherwise. The corresponding decision-node cue is
represented as
\begin{equation}
    \widetilde{O}_t^{(m,\Delta)}
    =
    \begin{cases}
        O_t^{(m)}, & A_{t,m}^{(\Delta)}=1,\\
        \bot, & A_{t,m}^{(\Delta)}=0,
    \end{cases}
    \label{eq:admitted_modality_observation}
\end{equation}
where $\bot$ denotes the absence of an admitted cue. The admitted evidence
is therefore
\begin{equation}
    \widetilde{\bO}_t^{(\Delta)}
    =
    \left(
        \widetilde{O}_t^{(1,\Delta)},
        \ldots,
        \widetilde{O}_t^{(M,\Delta)}
    \right).
    \label{eq:decision_node_evidence}
\end{equation}
If cue-specific metadata, such as timestamps, confidence scores, or
uncertainty estimates, are used by the inference engine, they will be
included as part of the corresponding decision-node cue representation.
\par The information available for the event-belief update is thus
\begin{equation}
    \left(
        C_t,\widetilde{\bO}_t^{(\Delta)}
    \right).
    \label{eq:decision_node_information_pair}
\end{equation}
The posterior beliefs and event-reliability quantities developed
subsequently in this paper are defined with respect to this decision-node information.

\subsection{Event-Belief Update and Decoding}
\label{subsec:event_decoder_posterior_inference}

The perception stack at the \gls{bs} combines the decision-node context
with the admitted evidence; see Fig.~\ref{fig:eir_inference_layer}. For
each $e\in\cE$, the underlying joint law induces the posterior
\begin{equation}
    p
    \left(
        e\mid C_t,\widetilde{\bO}_t^{(\Delta)}
    \right),
    \label{eq:true_posterior}
\end{equation}
which represents the event belief supported by the available decision-node
information. An implemented inference engine produces
\begin{equation}
    q_\theta
    \left(
        e\mid C_t,\widetilde{\bO}_t^{(\Delta)}
    \right),
    \label{eq:posterior_approx}
\end{equation}
where $q_\theta$ coincides with the posterior above when it is available
and exploited exactly, and otherwise represents the belief produced by an
approximate Bayesian, learned perception, or world-model-based inference
mechanism. Dependence on $\Delta$ arises through
$\widetilde{\bO}_t^{(\Delta)}$, since different usefulness horizons can
admit different cue sets.

\par The corresponding hard event estimate is obtained using the
\gls{map} rule~\cite{varshney1997distributed,viswanathan1997distributed}:
\begin{equation}
    \hat{E}_t^{(\Delta)}
    =
    \arg\max_{e\in\cE}
    q_\theta
    \left(
        e\mid C_t,\widetilde{\bO}_t^{(\Delta)}
    \right).
    \label{eq:map_event_decoder}
\end{equation}

\par This formulation separates a property of the available information
from the performance of the inference engine used to exploit it. The joint
law of
$\left(E_t,C_t,\widetilde{\bO}_t^{(\Delta)}\right)$ determines the
posterior in \eqref{eq:true_posterior} and hence the uncertainty and minimum
event error supported by the admitted evidence. The implemented posterior
$q_\theta$ determines the operational event decision. When the underlying
posterior is available and used exactly, $q_\theta=p$ and this distinction
collapses to the Bayes-optimal case.

\begin{remark}[Relation to active inference]
Within an active-inference interpretation, $E_t$ represents the physical
state or event to be inferred, $C_t$ determines the predictive prior, and
$\widetilde{\bO}_t^{(\Delta)}$ provides the network-admitted sensory
evidence used for the posterior belief update.
\end{remark}

\section{\Gls{eier} and \gls{eir}}
\label{sec:eier_and_reliability}

The system model identifies
$\left(C_t,\widetilde{\bO}_t^{(\Delta)}\right)$ as the information available
for the event-belief update. We now quantify how much of the
context-conditioned event uncertainty remains after incorporating the
admitted evidence. Specifically, \gls{eier} measures the unresolved
fraction of this uncertainty, while its complement, \gls{eir}, measures the
fraction resolved by the admitted evidence. We then relate these quantities
to evidence-limited event error and distinguish it from the error achieved
by the implemented inference engine.

\par All quantities in this section are defined with respect to the joint
distribution of
\begin{equation}
    \left(
        E_t,
        C_t,
        \widetilde{\bO}_t^{(\Delta)}
    \right),
    \label{eq:event_context_evidence_tuple}
\end{equation}
and therefore depend on the evidence actually available at the decision
node under the usefulness horizon $\Delta$.

\subsection{Information-Theoretic \Gls{eier} and \Gls{eir}}
\label{subsec:information_theoretic_eier}

Before incorporating the admitted evidence, the uncertainty about the event
is $H(E_t\mid C_t)$. Under usefulness horizon $\Delta$, the residual
uncertainty after incorporating the admitted evidence is
\begin{equation}
    H\left(
        E_t\mid C_t,\widetilde{\bO}_t^{(\Delta)}
    \right).
    \label{eq:residual_event_uncertainty_context}
\end{equation}
The information-theoretic \gls{eier} quantifies the fraction of the
context-conditioned event uncertainty that remains after this
update~\cite{CoverThomas2006,gamal2026information}.

\begin{definition}
For $H(E_t\mid C_t)>0$, the information-theoretic \gls{eier} under
usefulness horizon $\Delta$ is
\begin{equation}
    \mathrm{EIER}_{\mathrm{IT}}(\Delta)
    =
    \frac{
        H\left(
            E_t\mid C_t,\widetilde{\bO}_t^{(\Delta)}
        \right)
    }{
        H(E_t\mid C_t)
    }.
    \label{eq:eier_it_context}
\end{equation}
\end{definition}

Since conditioning cannot increase entropy,
$0\leq\mathrm{EIER}_{\mathrm{IT}}(\Delta)\leq1$. A value of zero means that
the admitted evidence resolves all context-conditioned event uncertainty
on average, whereas a value of one means that it provides no additional
information about the event beyond the context.

\par The corresponding uncertainty reduction follows from the conditional
\gls{mi} identity~\cite{CoverThomas2006}
\begin{equation}
\begin{split}
    I\left(
        E_t;
        \widetilde{\bO}_t^{(\Delta)}
        \mid C_t
    \right)
    &=
    H(E_t\mid C_t)\\
    &\quad-
    H\left(
        E_t\mid C_t,\widetilde{\bO}_t^{(\Delta)}
    \right).
    \label{eq:conditional_event_mi_identity}
\end{split}
\end{equation}
Accordingly, \gls{eir} is defined as the fraction of the
context-conditioned uncertainty resolved by the admitted evidence:
\begin{equation}
    \mathrm{EIR}(\Delta)
    =
    \frac{
        I\left(
            E_t;
            \widetilde{\bO}_t^{(\Delta)}
            \mid C_t
        \right)
    }{
        H(E_t\mid C_t)
    }
    =
    1-\mathrm{EIER}_{\mathrm{IT}}(\Delta).
    \label{eq:eir_context}
\end{equation}
Thus, a higher \gls{eir} indicates that the admitted evidence resolves a
larger fraction of the uncertainty remaining after conditioning on context.
Its value therefore depends not only on which cues are admitted, but also
on how informative those cues are about the event.

\begin{remark}
The normalised definitions require $H(E_t\mid C_t)>0$. If
$H(E_t\mid C_t)=0$, the event is already determined by the context almost
surely and there is no remaining uncertainty for the current evidence to
reduce. We therefore treat this degenerate case as resolved by context and
exclude it from normalised \gls{eier} and \gls{eir} evaluation.
\end{remark}

\subsection{Evidence-Limited Event Error}
\label{subsec:evidence_limited_event_error}

The information-theoretic \gls{eier} quantifies residual event uncertainty
rather than the probability of an incorrect event decision. To relate this
uncertainty to decision performance, we consider the minimum event error
supported by the same context and admitted evidence. Under equal costs for
all incorrect event decisions, the Bayes-optimal estimate is
\begin{equation}
    \hat{E}_{t}^{\star,(\Delta)}
    =
    \arg\max_{e\in\cE}
    p\left(
        e\mid C_t,\widetilde{\bO}_t^{(\Delta)}
    \right),
    \label{eq:bayes_event_decoder}
\end{equation}
with corresponding error probability
\begin{equation}
\begin{split}
    P_e^\star(\Delta)
    &=
    \Prb\left(
        \hat{E}_{t}^{\star,(\Delta)}\neq E_t
    \right)\\
    &=
    1-
    \E\left[
        \max_{e\in\cE}
        p\left(
            e\mid C_t,\widetilde{\bO}_t^{(\Delta)}
        \right)
    \right].
    \label{eq:minimum_event_error}
\end{split}
\end{equation}
This is the minimum average event error achievable by any decision rule
using the available context and admitted evidence
~\cite{varshney1997distributed}. It therefore defines the
implementation-independent, evidence-limited benchmark induced by the
decision-node information law~\cite{gamal2026information}. When that law is known or represented by a tractable probabilistic model, the benchmark can be evaluated directly; otherwise, it remains the corresponding Bayes limit against which an implementable inference engine is conceptually measured.
\par The residual entropy in
\eqref{eq:residual_event_uncertainty_context} and
$P_e^\star(\Delta)$ capture different properties of the posterior
distribution. The former quantifies its average uncertainty across the
event alphabet, whereas the latter is determined by its largest posterior
probability. Consequently,
$\mathrm{EIER}_{\mathrm{IT}}(\Delta)$ does not uniquely determine
$P_e^\star(\Delta)$ and does not, in general, induce the same ordering
across different evidence configurations. Residual entropy nevertheless
constrains the minimum achievable event error through the entropy-based
bound developed in the next section~\cite{gamal2026information}.

\subsection{Decoder-Specific Event Error}
\label{subsec:decoder_specific_event_error}

A practical inference engine can differ from the Bayes-optimal rule when
the posterior in \eqref{eq:true_posterior} is not directly available or is
only approximately represented by $q_\theta$. The operational event-error
probability of the resulting estimate in \eqref{eq:map_event_decoder} is
\begin{equation}
    P_{e,\mathrm{op}}(\Delta;\theta)
    =
    \Prb\left(
        \hat{E}_t^{(\Delta)}\neq E_t
    \right).
    \label{eq:decoder_specific_event_error}
\end{equation}
Since $P_e^\star(\Delta)$ is the minimum error achievable from the same
decision-node information,
\begin{equation}
    P_{e,\mathrm{op}}(\Delta;\theta)
    \geq
    P_e^\star(\Delta).
    \label{eq:decoder_bayes_error_relation}
\end{equation}
When the Bayes benchmark is available or can be estimated, the difference
between these quantities measures the additional performance loss
associated with the implemented inference engine. If $q_\theta$ coincides
with the underlying posterior and MAP decoding is used, the inequality
holds with equality and no decoder-specific gap remains.

\par For a labelled evaluation set
\begin{equation}
    \left\{
        \left(
            E_i,
            C_i,
            \widetilde{\bO}_i^{(\Delta)}
        \right)
    \right\}_{i=1}^{N},
    \label{eq:event_labelled_instances}
\end{equation}
the operational event error can be estimated as
\begin{equation}
    \widehat{P}_{e,\mathrm{op}}(\Delta;\theta)
    =
    \frac{1}{N}
    \sum_{i=1}^{N}
    \mathbf{1}
    \left[
        \hat{E}_i^{(\Delta)}\neq E_i
    \right],
    \label{eq:empirical_decoder_specific_event_error}
\end{equation}
where $\mathbf{1}[\cdot]$ is the indicator function and the evaluation set
is independent of the data used to train the inference engine.

\par The distinction therefore separates ambiguity attributable to the available
evidence from additional loss introduced when the implemented inference
engine only approximates the Bayes posterior, owing for example to model
mismatch, finite training data, or computational constraints. The \gls{eier} provides the
complementary uncertainty-based characterisation of the admitted evidence.
From an active-inference perspective, \gls{eier} and \gls{eir} characterise
the uncertainty remaining and resolved, respectively, in the perceptual
belief update.

\section{Event-Inference Limits and Wireless Design}
\label{sec:it_bound_event_error}

We now connect $\mathrm{EIER}_{\mathrm{IT}}$ to evidence-limited event
error and event-aware wireless design. First, we derive an entropy-based
lower bound on the minimum achievable event error; we then examine how
wireless control shapes the evidence admitted within the prescribed
usefulness horizon.

\subsection{Entropy-Based Bound on Event Error}
\label{subsec:entropy_based_event_error_bound}

Recall that $P_e^\star(\Delta)$ is the minimum event error achievable from
the decision-node information
$\left(C_t,\widetilde{\bO}_t^{(\Delta)}\right)$. The following result
relates this evidence-limited error to the residual uncertainty about the
event~\cite{Fano1961,CoverThomas2006,wainwright2019high}.

\begin{proposition}
\label{prop:fano_event_error_bound}
For an event alphabet $\cE$ of size $K\geq2$,
\begin{equation}
    P_e^\star(\Delta)
    \geq
    \left[
        \frac{
            H\left(
                E_t\mid C_t,\widetilde{\bO}_t^{(\Delta)}
            \right)-1
        }{
            \log_2 K
        }
    \right]^+,
    \label{eq:fano_lower_context}
\end{equation}
where $[x]^+=\max\{x,0\}$. When $H(E_t\mid C_t)>0$, this can equivalently
be written as
\begin{equation}
    P_e^\star(\Delta)
    \geq
    \left[
        \frac{
            H(E_t\mid C_t)\,
            \mathrm{EIER}_{\mathrm{IT}}(\Delta)-1
        }{
            \log_2 K
        }
    \right]^+.
    \label{eq:fano_eier_context}
\end{equation}
\end{proposition}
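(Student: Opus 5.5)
The plan is to apply Fano's inequality to the Bayes estimate $\hat{E}_t^{\star,(\Delta)}$ in \eqref{eq:bayes_event_decoder}. This estimate is a deterministic function of the decision-node information $Y\triangleq\left(C_t,\widetilde{\bO}_t^{(\Delta)}\right)$, so $E_t\to Y\to\hat{E}_t^{\star,(\Delta)}$ is a Markov chain. Write $P_e=P_e^\star(\Delta)$ and $h_2(\cdot)$ for the binary entropy function. Introduce the error indicator $Z=\mathbf{1}[\hat{E}_t^{\star,(\Delta)}\neq E_t]$ and expand $H(E_t,Z\mid Y)$ in two ways.

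First, $Z$ is determined by $(E_t,Y)$, so $H(E_t,Z\mid Y)=H(E_t\mid Y)$.

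Second, by the chain rule, $H(E_t,Z\mid Y)=H(Z\mid Y)+H(E_t\mid Z,Y)$. The first term satisfies $H(Z\mid Y)\le H(Z)=h_2(P_e)\le 1$. For the second term, on $\{Z=0\}$ the event equals the estimate, which is a function of $Y$, so that term contributes zero. On $\{Z=1\}$ the event lies in $\cE$ minus one point, so its contribution is at most $P_e\log_2(K-1)$.

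Combining the two expansions gives the standard Fano form
\begin{equation*}
H\left(E_t\mid C_t,\widetilde{\bO}_t^{(\Delta)}\right)\le h_2(P_e)+P_e\log_2(K-1)\le 1+P_e\log_2 K .
\end{equation*}
Rearranging yields $P_e\ge\left(H(E_t\mid Y)-1\right)/\log_2 K$, which is valid because $K\ge2$ makes $\log_2K>0$. Since also $P_e\ge0$, taking the maximum of the two bounds gives the $[\cdot]^+$ form in \eqref{eq:fano_lower_context}. The second form \eqref{eq:fano_eier_context} then follows by substituting $H(E_t\mid Y)=H(E_t\mid C_t)\,\mathrm{EIER}_{\mathrm{IT}}(\Delta)$ from \eqref{eq:eier_it_context}, which is legitimate whenever $H(E_t\mid C_t)>0$.

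The only real care point is the conditioning on a possibly mixed or continuous $Y$. The admitted evidence takes values in $\cO_m\cup\{\bot\}$, which may be continuous. I would handle this by carrying out the inequality pointwise for each realisation $y$, using the conditional law $p(\cdot\mid y)$ together with the conditional error probability $P_e(y)=1-\max_e p(e\mid y)$. I would then average over $Y$, using concavity of $h_2$ (Jensen's inequality) so that the average of $h_2(P_e(y))$ is bounded by $h_2$ of the average error. Since the final bound uses $h_2\le 1$ anyway, this step could even be skipped. Everything else is routine.
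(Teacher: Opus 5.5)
Your proof is correct and takes essentially the same route as the paper: Fano's inequality, then the relaxations $h_b(P_e)\le 1$ and $\log_2(K-1)\le\log_2 K$, rearrangement together with $P_e\ge 0$, and substitution of $H(E_t\mid C_t)\,\mathrm{EIER}_{\mathrm{IT}}(\Delta)$. The only difference is that you derive Fano inline by conditioning directly on $(C_t,\widetilde{\bO}_t^{(\Delta)})$, whereas the paper states Fano for $H(E_t\mid\hat{E}_t)$ under an arbitrary decision rule, passes to $H(E_t\mid C_t,\widetilde{\bO}_t^{(\Delta)})\le H(E_t\mid\hat{E}_t)$ because $\hat{E}_t$ is a function of that information, and then specialises to the Bayes rule.
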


\begin{IEEEproof}
Let
\begin{equation}
    Z_t
    =
    \left(
        C_t,\widetilde{\bO}_t^{(\Delta)}
    \right)
\end{equation}
denote the decision-node information. For any decision rule
$\hat{E}_t=g(Z_t)$ with error probability
$P_e=\Prb(\hat{E}_t\neq E_t)$, Fano's inequality gives
\begin{equation}
    H(E_t\mid\hat{E}_t)
    \leq
    h_b(P_e)+P_e\log_2(K-1),
    \label{eq:fano_standard}
\end{equation}
where $h_b(\cdot)$ is the binary entropy function. Since $\hat{E}_t$ is a
function of $Z_t$,
\begin{equation}
    H(E_t\mid Z_t)
    \leq
    H(E_t\mid\hat{E}_t).
\end{equation}
Using $h_b(P_e)\leq1$ and
$\log_2(K-1)\leq\log_2K$ yields
\begin{equation}
    H(E_t\mid Z_t)
    \leq
    1+P_e\log_2K.
\end{equation}
Hence,
\begin{equation}
    P_e
    \geq
    \left[
        \frac{
            H(E_t\mid Z_t)-1
        }{
            \log_2K
        }
    \right]^+.
\end{equation}
Since this holds for every decision rule based on $Z_t$, it also holds for
the Bayes-optimal rule and therefore for $P_e^\star(\Delta)$.
Substituting
$Z_t=\left(C_t,\widetilde{\bO}_t^{(\Delta)}\right)$ gives
\eqref{eq:fano_lower_context}. Finally, for $H(E_t\mid C_t)>0$,
\begin{equation}
    H\left(
        E_t\mid C_t,\widetilde{\bO}_t^{(\Delta)}
    \right)
    =
    H(E_t\mid C_t)\,
    \mathrm{EIER}_{\mathrm{IT}}(\Delta),
\end{equation}
which gives \eqref{eq:fano_eier_context}.
\end{IEEEproof}

Proposition~\ref{prop:fano_event_error_bound} provides an information-theoretic connection between residual event uncertainty and evidence-limited decision performance. When the residual entropy is sufficiently large, it imposes a strictly positive lower bound on the event error achievable from the same context and admitted evidence. Reducing
$\mathrm{EIER}_{\mathrm{IT}}(\Delta)$ relaxes this constraint, although the exact value of $P_e^\star(\Delta)$ remains determined by the full posterior distribution over $\cE$~\cite{gamal2026information}. Wireless design, by shaping which cues become available within the prescribed usefulness horizon, consequently shapes the admitted evidence and its residual event uncertainty.

\subsection{Event-Aware Wireless Design}
\label{subsec:event_aware_wireless_control}

Wireless control shapes the admitted evidence by influencing whether and
when individual cues become available at the decision node. Let
$\bm{u}\in\mathcal{U}$ denote the available control variables, including,
for example, scheduling, retransmission, resource allocation, path
selection, compression, access priority, and cue-reliability
allocation~\cite{chiariotti2022voi,shao2023task}. For a prescribed
usefulness horizon $\Delta$, these controls induce the evidence process
$\widetilde{\bO}_t^{(\Delta)}(\bm{u})$ and hence
\begin{equation}
    \mathrm{EIER}_{\mathrm{IT}}(\Delta,\bm{u})
    =
    \frac{
        H\left(
            E_t\mid
            C_t,\widetilde{\bO}_t^{(\Delta)}(\bm{u})
        \right)
    }{
        H(E_t\mid C_t)
    }.
    \label{eq:eier_with_control}
\end{equation}
Wireless design can therefore be evaluated according to the residual event
uncertainty induced by the resulting admitted evidence. Alternatively, it
can be evaluated directly through the corresponding Bayes event error
$P_e^\star(\Delta,\bm{u})$. These criteria capture complementary properties
of the event belief: \gls{eier} depends on the full spread of the posterior
distribution, whereas the Bayes event error depends on its largest posterior
probability. This distinction allows
the wireless design objective to be matched to whether the application
prioritises belief quality or event-decision accuracy.

\par More generally, for a prescribed usefulness horizon $\Delta$ and
operating condition $\xi$, let
$\mathcal{J}_E(\Delta,\bm{u};\xi)$ denote the selected event-level
performance criterion, such as
$\mathrm{EIER}_{\mathrm{IT}}(\Delta,\bm{u};\xi)$ or
$P_e^\star(\Delta,\bm{u};\xi)$. Event-aware wireless design can then seek
\begin{equation}
    \bm{u}_E^\star(\Delta,\xi)
    \in
    \arg\min_{\bm{u}\in\mathcal{U}(\Delta,\xi)}
    \mathcal{J}_E(\Delta,\bm{u};\xi),
    \label{eq:general_event_aware_design}
\end{equation}
where $\mathcal{U}(\Delta,\xi)$ denotes the set of feasible wireless
policies under the corresponding temporal, resource, and operating
constraints. Alternatively, when wireless resources are themselves the
design objective, an event-level reliability requirement can be imposed as
\begin{equation}
\begin{aligned}
    \min_{\bm{u}\in\mathcal{U}(\Delta,\xi)}
    \quad
    & \mathcal{C}(\bm{u})\\
    \mathrm{s.t.}
    \quad
    &
    \mathcal{J}_E(\Delta,\bm{u};\xi)
    \leq
    \epsilon_E ,
\end{aligned}
\label{eq:event_aware_resource_design}
\end{equation}
where $\mathcal{C}(\bm{u})$ represents the relevant wireless-resource cost.
These formulations provide a general interface through which wireless
control can be driven by the quality of the resulting event inference, as discussed in the following.

\par One representative specialisation is cue prioritisation under
constrained wireless resources. Using Bayes event error as the event-level
criterion, the event-aware policy becomes
\begin{equation}
    \bm{u}_{E}^{\star}(\Delta,\xi)
    \in
    \arg\min_{\bm{u}\in\mathcal{U}(\Delta,\xi)}
    P_e^\star(\Delta,\bm{u};\xi).
    \label{eq:event_aware_priority_policy}
\end{equation}
The resulting allocation accounts for how the timely admission of each cue
affects event-decision performance, rather than prioritising cues according
to link-level delivery characteristics alone. A second specialisation characterises the minimum usefulness horizon
within which a target event-error level can be supported. For operating
condition $\xi$, define
\begin{equation}
    \Delta_E^\star
    \left(
        \epsilon_E^{(P)};\xi
    \right)
    =
    \inf
    \left\{
        \Delta:
        \min_{\bm{u}\in\mathcal{U}(\Delta,\xi)}
        P_e^\star(\Delta,\bm{u};\xi)
        \leq
        \epsilon_E^{(P)}
    \right\}.
    \label{eq:min_event_useful_twi}
\end{equation}
Here, $\Delta_E^\star$ is not a wireless control variable, but a network
capability measure: it gives the shortest task-imposed usefulness horizon
for which the feasible wireless design can support the required event-error
level. Viewed across operating conditions, it provides a temporal measure
of event-inference coverage.
\par Event-aware wireless design therefore operates at the evidence interface,
shaping the sensory evidence supplied to the perceptual belief update and,
in an active-inference interpretation, the information available for
subsequent belief-driven decision and action. At the protocol level, these
event-aware decisions can be realised through source-specific controls such
as resource allocation, access priority, or retransmission limits. In the
centralised setting considered here, the \gls{bs} determines the corresponding
control configuration and conveys it to the relevant cue sources through
control signalling.
Section~\ref{sec:operationalisation_activity_event} instantiates the two
specialisations above to study cue prioritisation and event-inference
coverage under event-aware and link-level cue-reliability allocation.

\section{\Gls{twi}-Gated Multimodal Activity Inference}
\label{sec:operationalisation_activity_event}
We now instantiate the indoor activity-inference setting introduced as a
guiding example in Section~\ref{sec:introduction}. We consider a
wireless-enabled \gls{pai} system that observes a sensorised indoor
environment through distributed wearable, object-attached, and ambient
sensing sources. The resulting multimodal cues are delivered to the
\gls{bs}, which acts as the edge decision node for activity-event inference.
Continuous sensor streams are partitioned into activity-event instances,
each associated with an event reference time $t$ and a multiclass physical
event $E_t$. For example, the event alphabet can
distinguish among locomotion or body-motion activity, object-interaction
activity, and background or other room-level activity. For each event
instance, the \gls{bs} combines its context-conditioned prior with the
multimodal cues admitted by $t+\Delta$ to form a posterior belief over the
activity event. This provides a concrete three-class instantiation of the
general framework developed above, which extends to other finite event
alphabets and multimodal cue configurations without changing its structure. 

\par This operationalisation couples an activity-event inference layer with a
wireless evidence-admission layer. The activity-inference layer is
instantiated using the OPPORTUNITY dataset~\cite{opportunity_dataset}. Its
activity annotations provide the ground-truth event labels, while wearable,
object-attached, and ambient sensor streams are grouped into
modality-specific cue sources. The corresponding sensor data are conveyed
to the \gls{bs}, where cue-specific supervised-learning classifiers produce
discrete observations $Y_t^{(m)}$~\cite{roggen2010collecting,opportunity_dataset}.
These classifier outputs instantiate the modality-specific cues introduced
in the system model, with empirical class-conditional statistics
\begin{equation}
    \Prb
    \left(
        Y_t^{(m)}=y
        \mid
        E_t=e
    \right)
\end{equation}
providing the cue likelihoods used to characterise their informativeness
and update the context-conditioned prior.

\par The wireless evidence-admission layer is instantiated using Wi3R indoor
channel records from the Wireless Indoor Simulations (WiInSim) dataset by
Qualcomm AI Research~\cite{qualcomm_indoor_dataset,wiinsim2025}. Each
wireless realisation specifies the cue-source-to-\gls{bs} geometry and link
conditions governing delivery of the corresponding sensor data. Together
with the prescribed usefulness horizon $\Delta$, these conditions determine
which cue sources contribute observations to the current event-belief
update. The operating condition $\boldsymbol{\xi}$ represents the relevant
wireless environment, while the resulting admitted cue set determines the
posterior event belief and the associated \gls{eier}, \gls{eir}, and Bayes
event error.
\begin{figure*}[!t]
    \centering
    \includegraphics[width=\textwidth]{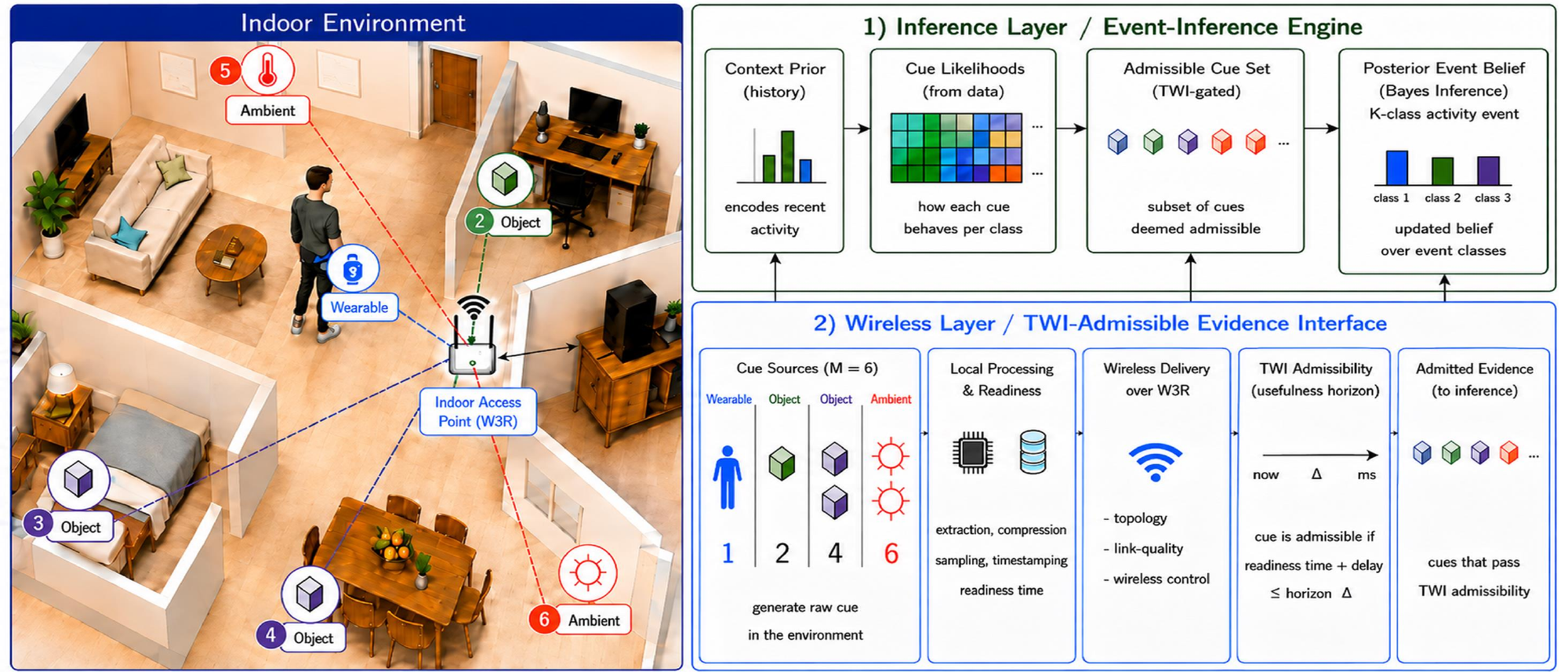}
    \caption{\small Indoor instantiation of the \gls{twi}-gated multimodal
activity-inference framework. Six distributed cue sources generate
multimodal cues whose wireless delivery and \gls{twi}-based admission
determine the evidence available at the \gls{bs}. The inference engine
combines the admitted cue set with a context-conditioned prior and
empirical cue likelihoods to form the posterior belief over three
activity classes.\vspace{-0.5cm}}
    \label{fig:section_v_instantiation}
\end{figure*}

\subsection{Activity State and Cue Observation Model}
\label{subsec:activity_event_cue_model}
With $t$ denoting the event reference time of an activity-inference
instance, let
$E_t\in\mathcal{E}$ denote the corresponding activity event, where
$\mathcal{E}=\{1,\ldots,K\}$ is a finite activity alphabet. In activity
inference, $K$ is fixed by the activity taxonomy adopted for the application,
reflecting the activity distinctions relevant to the task and supported by
the available sensing modalities. In the OPPORTUNITY-based instantiation, the original activity annotations are
therefore coarsened into a small number of interpretable activity classes.
Each labelled activity window defines one inference instance and is assigned
to one of these classes. Alternative activity taxonomies can be accommodated
by changing the event alphabet and the corresponding dimensions of the
cue-likelihood matrices, posterior belief, and entropy calculations.
\par Before incorporating the current cues, the decision node holds the
context-conditioned prior
\begin{equation}
    \lambda_{t,e}
    =
    \Prb(E_t=e\mid C_t),
    \quad e\in\mathcal{E},
    \label{eq:activity_context_prior}
\end{equation}
where $C_t$ contains the information available before the current cue set is
admitted. The vector
\begin{equation}
    \boldsymbol{\lambda}_t
    =
    [\lambda_{t,1},\ldots,\lambda_{t,K}]^{\mathsf T}
    \label{eq:activity_prior_vector}
\end{equation}
therefore represents the prior belief over the activity classes before the
current event-belief update.

\par The distributed sensor streams are grouped into cue sources
$m\in\mathcal{M}$, where $\mathcal{M}=\{1,\ldots,M\}$. For source $m$,
window-level features $\mathbf{X}_t^{(m)}$ are encoded as
\begin{equation}
    \mathbf{Z}_t^{(m)}
    =
    \phi_m\left(\mathbf{X}_t^{(m)}\right),
    \label{eq:compact_evidence_vector}
\end{equation}
and, following delivery, a source-specific classifier $g_m$ at the
\gls{bs} produces
\begin{equation}
    Y_t^{(m)}
    =
    g_m\left(\mathbf{Z}_t^{(m)}\right)
    \in\mathcal{Y}_m.
    \label{eq:multiclass_local_cue}
\end{equation}
Here, $Y_t^{(m)}$ instantiates the decision-node cue $O_t^{(m)}$ of the
general system model, obtained from the delivered representation
$\mathbf{Z}_t^{(m)}$. We take $\mathcal{Y}_m=\mathcal{E}$, so each
source produces an activity-class observation over the common event
alphabet; the final event estimate is obtained only after combining the
admitted cues with the context prior.

\par The informativeness of source $m$ is characterised by the empirical
likelihood matrix
\begin{equation}
    \mathbf{A}^{(m)}_{y,e}
    =
    \Prb\left(
        Y_t^{(m)}=y
        \mid
        E_t=e
    \right),
    \quad
    y\in\mathcal{Y}_m,\;
    e\in\mathcal{E},
    \label{eq:multiclass_cue_likelihood_matrix}
\end{equation}
with
\begin{equation}
    \sum_{y\in\mathcal{Y}_m}
    \mathbf{A}^{(m)}_{y,e}
    =
    1,
    \quad e\in\mathcal{E}.
    \label{eq:multiclass_cue_likelihood_normalisation}
\end{equation}
The matrix therefore captures which activity classes the cue source
distinguishes well and which it tends to confuse.

\par For tractable multimodal fusion, cue observations are assumed
conditionally independent given the activity event. For an admitted subset
$S\subseteq\mathcal{M}$ and realisation
$\boldsymbol{y}_S=(y_m)_{m\in S}$,
\begin{equation}
    L_e(\boldsymbol{y}_S)
    =
    \prod_{m\in S}
    \mathbf{A}^{(m)}_{y_m,e},
    \quad e\in\mathcal{E}.
    \label{eq:multiclass_subset_likelihood}
\end{equation}
For statistically dependent cues, this product can be replaced by an
empirical or learned joint likelihood without changing the subsequent
posterior, entropy, or event-error definitions.
Together, $\boldsymbol{\lambda}_t$ and $L_e(\boldsymbol{y}_S)$ provide the
prior and likelihood terms for the posterior event-belief update.

\subsection{Wireless Evidence Delivery and \Gls{twi} Admission}
\label{subsec:activity_twi_delivery}

We next specify how the representation generated by each cue source becomes
available for the posterior event-belief update. For an activity-inference instance associated with event reference time $t$, source $m$ forms the compact representation
$\mathbf{Z}_t^{(m)}$ and transmits it to the \gls{bs}. If the corresponding
packet is successfully delivered within the usefulness horizon $\Delta$,
the \gls{bs} applies the classifier $g_m$ and admits the resulting cue
$Y_t^{(m)}$ into the posterior update. Otherwise, that cue is unavailable
for the current inference instance. Wireless delivery and \gls{twi}
admission therefore induce a random subset of cue observations contributing
to the posterior belief over $E_t$.
\par Let
$\boldsymbol{\xi}=(\xi_1,\ldots,\xi_M)$ denote the operating condition,
where $\xi_m$ collects the link-level descriptors governing delivery from
source $m$ to the \gls{bs}. The numerical evaluation instantiates
$\boldsymbol{\xi}$ using indoor channel records, while the admission model
requires only the resulting packet-failure and delay distributions under
the selected wireless control $\bm{u}$.
\par Fig.~\ref{fig:section_v_instantiation} illustrates the representative
direct-delivery topology considered in the numerical study. For source $m$,
let $\Gamma_m$ denote the path from the cue-source location
$\mathbf{r}_m$ to the \gls{bs} location $\mathbf{b}$:
\begin{equation}
    \Gamma_m:
    \mathbf{r}_m
    \rightarrow
    \mathbf{b},
    \quad
    m=1,\ldots,M.
    \label{eq:activity_direct_cue_path}
\end{equation}
The same formulation accommodates multi-hop delivery by replacing
$\Gamma_m$ with the corresponding relay path and including its associated
delay components; the present instantiation considers direct delivery.

\par The event-relative end-to-end delay along $\Gamma_m$ is written as
\begin{equation}
    T_{\Gamma_m}
    =
    T_{\mathrm{prop},m}
    +
    T_{\mathrm{comp},m}
    +
    T_{\mathrm{comm},m},
    \label{eq:activity_end_to_end_cue_delay}
\end{equation}
following the \gls{twi} path-delay structure. Here,
$T_{\mathrm{prop},m}$ represents physical event propagation to the sensing
source, $T_{\mathrm{comp},m}$ the cue-readiness time, and
$T_{\mathrm{comm},m}$ the wireless delivery delay. In the considered
indoor setting, physical event-propagation delays are negligible relative
to cue formation and wireless delivery, and hence
\begin{equation}
    T_{\mathrm{prop},m}=0,
    \quad
    m=1,\ldots,M.
    \label{eq:activity_zero_propagation_delay}
\end{equation}
Any sensing or accumulation latency associated with forming the
window-level representation is included in $T_{\mathrm{comp},m}$.

\par Specifically, let the aggregate cue-readiness time be
\begin{equation}
    B_m
    \sim
    \mathrm{Unif}
    \left[
        B_{\min,m},B_{\max,m}
    \right].
    \label{eq:activity_buffering_time}
\end{equation}
This term accounts for cue readiness, construction of the window-level
features, and encoding of $\mathbf{Z}_t^{(m)}$. Because transmission is frame based, the resulting
computation/readiness contribution is quantised to the frame duration
$T_f$~\cite{mishra2025temporal}:
\begin{equation}
    T_{\mathrm{comp},m}
    =
    T_f
    \left\lceil
        \frac{B_m}{T_f}
    \right\rceil .
    \label{eq:activity_frame_quantised_comp_delay}
\end{equation}
The \gls{bs}-side classifier evaluation time is taken as fixed. Accordingly,
when enforcing the task-imposed usefulness horizon $\Delta$, this fixed
inference time is deducted from the time budget available for cue formation
and delivery.

\par The wireless delay $T_{\mathrm{comm},m}$ follows a frame-level
grant-free transmission model with ACK/NACK feedback and at most
$N_m^{\max}$ attempts. Each attempt occupies one frame of duration $T_f$.
If the first successful attempt occurs at index $J_m=j$, then we get
\begin{equation}
    T_{\mathrm{comm},m}
    =
    jT_f .
    \label{eq:activity_comm_delay_frame_model}
\end{equation}
Let $\epsilon_m$ denote the per-attempt packet-error probability under
$\xi_m$ and control $\bm{u}$. In the numerical implementation,
$\epsilon_m$ is obtained from the received power, noise floor, packet
length, and an IEEE~802.15.4-like packet-error
model~\cite{ieee802154_plr}. Under conditionally independent attempts,
\begin{equation}
    p_{\mathrm{GF},m}
    =
    1-\epsilon_m,
    \qquad
    P_{\Gamma_m}^{\mathrm{drop}}
    =
    \epsilon_m^{N_m^{\max}}.
\end{equation}
Conditioned on successful delivery,
\begin{equation}
    \Prb
    \left(
        J_m=j
        \mid
        \mathrm{no\ drop}
    \right)
    =
    \frac{
        \epsilon_m^{j-1}
        \left(1-\epsilon_m\right)
    }{
        1-\epsilon_m^{N_m^{\max}}
    },
    \quad
    j=1,\ldots,N_m^{\max}.
    \label{eq:gf_success_attempt_distribution}
\end{equation}

\par Representing packet failure by $T_{\Gamma_m}=\infty$, the admission
indicator for cue $m$ is
\begin{equation}
    A_{t,m}^{(\Delta)}
    =
    \mathbf{1}
    \left\{
        T_{\Gamma_m}\leq\Delta
    \right\},
    \label{eq:activity_admissibility_indicator}
\end{equation}
where $A_{t,m}^{(\Delta)}=1$ indicates successful delivery within the
usefulness horizon. As in the general system model, simultaneity and
causal-order consistency are assumed to be handled upstream; they can
instead be incorporated as additional admission conditions without changing
the subsequent event-inference formulation.

\par The corresponding admission probability is
\begin{align}
    r_m(\Delta;\xi_m,\bm{u})
    &=
    \Prb
    \left(
        T_{\Gamma_m}\leq\Delta
        \mid
        \xi_m,\bm{u}
    \right)
    \label{eq:activity_cue_admissibility_probability}
    \\
    &=
    \left(
        1-P_{\Gamma_m}^{\mathrm{drop}}
    \right)
    F_{\Gamma_m}
    \left(
        \Delta
        \mid
        \mathrm{no\ drop},\xi_m,\bm{u}
    \right),
    \nonumber
\end{align}
where $F_{\Gamma_m}$ is the conditional \gls{cdf} of the end-to-end delay
given successful delivery. The admitted cue-source subset is
\begin{equation}
    S_\Delta(t)
    =
    \left\{
        m\in\mathcal{M}:
        A_{t,m}^{(\Delta)}=1
    \right\}.
    \label{eq:activity_admissible_subset}
\end{equation}

\par Under conditionally independent direct deliveries, the probability of
admitting subset $S\subseteq\mathcal{M}$ is
\begin{align}
    \rho_\Delta(S;\boldsymbol{\xi},\bm{u})
    &=
    \Prb
    \left(
        S_\Delta(t)=S
        \mid
        \boldsymbol{\xi},\bm{u}
    \right)
    \label{eq:activity_independent_subset_distribution}
    \\
    &=
    \prod_{m\in S}
    r_m(\Delta;\xi_m,\bm{u})
    \prod_{j\in\mathcal{M}\setminus S}
    \left[
        1-r_j(\Delta;\xi_j,\bm{u})
    \right].
    \nonumber
\end{align}
For coupled deliveries, this product is replaced by the corresponding joint
distribution of
$\left(A_{t,1}^{(\Delta)},\ldots,A_{t,M}^{(\Delta)}\right)$.

\par The decision-node evidence can therefore be represented as
\begin{equation}
    \widetilde{\bO}_t^{(\Delta)}
    \equiv
    \left(
        S_\Delta(t),
        \left\{
            Y_t^{(m)}:
            m\in S_\Delta(t)
        \right\}
    \right).
    \label{eq:activity_twi_admissible_evidence}
\end{equation}
\par Admission is assumed conditionally independent of the activity event
and cue realisations given $\boldsymbol{\xi}$ and $\bm{u}$, so
$S_\Delta(t)$ determines which cue-likelihood factors enter the posterior
without itself providing additional event information. If this assumption
does not hold, the admission mechanism must also be included in the
posterior model and is left for future work.

\subsection{Bayesian Activity Inference and Residual Uncertainty}
\label{subsec:posterior_activity_event}

For an activity-inference instance $t$, let
$S\subseteq\mathcal{M}$ denote the admitted cue subset and
$\boldsymbol{y}_S=(y_m)_{m\in S}$ its realisation. The \gls{bs} combines
the context prior $\boldsymbol{\lambda}_t$ with the likelihoods of the
admitted cues to form the posterior belief
$\boldsymbol{\eta}_t(\boldsymbol{y}_S)
=
\left(\eta_{t,e}(\boldsymbol{y}_S)\right)_{e\in\mathcal{E}}$.
Under the assumed admission model, which is independent of the activity
event and cue realisation conditioned on $\boldsymbol{\xi}$ and $\bm{u}$,
the admitted subset determines which likelihood factors enter the update.
Suppressing conditioning on the fixed operating condition and wireless
control, the posterior probability of activity class $e$ is
\begin{align}
    \eta_{t,e}(\boldsymbol{y}_S)
    &=
    \Prb\left(
        E_t=e
        \mid
        C_t,
        \boldsymbol{Y}_S=\boldsymbol{y}_S,
        S_\Delta(t)=S
    \right)
    \nonumber\\
    &=
    \frac{
        \lambda_{t,e}L_e(\boldsymbol{y}_S)
    }{
        \displaystyle
        \sum_{e'\in\mathcal{E}}
        \lambda_{t,e'}L_{e'}(\boldsymbol{y}_S)
    }.
    \label{eq:activity_posterior}
\end{align}
For $S=\emptyset$, we set $L_e(\emptyset)=1$, so that the posterior reduces
to the context prior $\boldsymbol{\lambda}_t$.
\par The predictive probability of cue realisation $\boldsymbol{y}_S$ is
\begin{align}
    p_{t,S}(\boldsymbol{y}_S)
    &=
    \Prb\left(
        \boldsymbol{Y}_S=\boldsymbol{y}_S
        \mid
        C_t,S_\Delta(t)=S
    \right)
    \nonumber\\
    &=
    \sum_{e\in\mathcal{E}}
    \lambda_{t,e}L_e(\boldsymbol{y}_S),
    \label{eq:predictive_cue_distribution}
\end{align}
where
$\boldsymbol{y}_S\in
\mathcal{Y}_S=\prod_{m\in S}\mathcal{Y}_m$.
The expected posterior entropy conditioned on admitting subset $S$ is
\begin{equation}
    H_{t,S}
    =
    \sum_{\boldsymbol{y}_S\in\mathcal{Y}_S}
    p_{t,S}(\boldsymbol{y}_S)
    H\left(
        \boldsymbol{\eta}_t(\boldsymbol{y}_S)
    \right),
    \label{eq:subset_residual_entropy_activity}
\end{equation}
with
\begin{equation}
    H\left(
        \boldsymbol{\eta}_t(\boldsymbol{y}_S)
    \right)
    =
    -
    \sum_{e\in\mathcal{E}}
    \eta_{t,e}(\boldsymbol{y}_S)
    \log_2
    \eta_{t,e}(\boldsymbol{y}_S).
    \label{eq:posterior_entropy_activity}
\end{equation}
\par Averaging over the random admitted subset gives the residual uncertainty
for instance $t$:
\begin{equation}
    h_t^{(\Delta)}
    (\boldsymbol{\xi},\bm{u})
    =
    \sum_{S\subseteq\mathcal{M}}
    \rho_\Delta
    (S;\boldsymbol{\xi},\bm{u})
    H_{t,S}.
    \label{eq:instance_residual_entropy_activity}
\end{equation}
The corresponding uncertainty before incorporating the current cues is
\begin{equation}
    h_t^{\mathrm{prior}}
    =
    -
    \sum_{e\in\mathcal{E}}
    \lambda_{t,e}\log_2\lambda_{t,e}.
    \label{eq:context_entropy_activity}
\end{equation}
\par The same posterior determines the evidence-limited event error. For an admitted subset $S$, the expected Bayes error is
\begin{equation}
    P_{e,t}^{\star}(S)
    =
    \sum_{\boldsymbol{y}_S\in\mathcal{Y}_S}
    p_{t,S}(\boldsymbol{y}_S)
    \left[
        1-
        \max_{e\in\mathcal{E}}
        \eta_{t,e}(\boldsymbol{y}_S)
    \right],
    \label{eq:activity_subset_bayes_error}
\end{equation}
and averaging over cue admission gives
\begin{equation}
    P_{e,t}^{\star}
    (\Delta;\boldsymbol{\xi},\bm{u})
    =
    \sum_{S\subseteq\mathcal{M}}
    \rho_\Delta
    (S;\boldsymbol{\xi},\bm{u})
    P_{e,t}^{\star}(S).
    \label{eq:activity_bayes_error_instance}
\end{equation}
\par Equations~\eqref{eq:activity_posterior}--%
\eqref{eq:activity_bayes_error_instance} instantiate the inference chain
from the context prior and admitted cue likelihoods to the posterior event
belief, residual uncertainty, and Bayes event error. Here,
$\boldsymbol{\eta}_t$ is the Bayes posterior under the empirical activity
model, so $P_{e,t}^{\star}$ is the corresponding model-based
evidence-limited benchmark. Averaging the residual and prior entropies over the labelled instances
yields the empirical \gls{eier} and \gls{eir}, while averaging the
corresponding Bayes errors gives the empirical evidence-limited event error.

\subsection{Inference and Evaluation Setup}
\label{subsec:activity_evaluation_setup}

\begin{figure}[!t]
    \centering
    \includegraphics[width=0.95\linewidth]{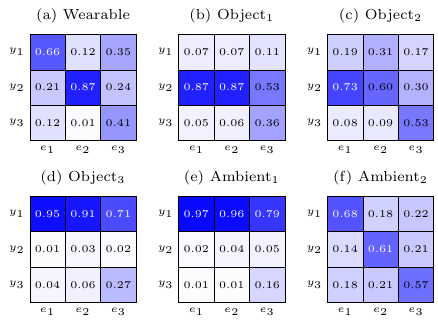}
    \caption{\small Cue-likelihood matrices $\mathbf{A}^{(m)}$ for the
    six cue sources. Colour intensity increases linearly from white~(0) to
    dark blue~(1).\vspace{-0.4cm}}
    \label{fig:cue_likelihood_matrices}
\end{figure}

The OPPORTUNITY recordings are organised by human subject, denoted S1--S4
in our split. Subjects S1--S3 are used to estimate the activity-transition
matrix and train the cue-specific classifiers, while S4 is used to estimate
the empirical cue-likelihood matrices and evaluate the resulting
\gls{twi}-gated inference. The sensor streams are segmented into
non-overlapping windows of $W=60$ consecutive samples, with each window
assigned its majority coarsened activity label. The end of each window is taken as the event reference time $t$ for the corresponding inference instance. At this window length, at least half of the windows contain a single coarsened activity class throughout, supporting their treatment as activity-inference instances.
After removing the first window of each S4 sequence, for which no previous within-sequence activity is available, the evaluation set contains $N=2446$ instances.

\par The context prior is instantiated using the preceding activity state.
The transition matrix estimated from S1--S3 and held fixed during evaluation,
with $t^{-}$ denoting the event reference time of the immediately preceding
activity-inference instance and
$T_{ij}=\Prb(E_t=j\mid E_{t^{-}}=i)$, is
\[
\mathbf{T}
=
\begin{bmatrix}
0.8753 & 0.0440 & 0.0807\\
0.1041 & 0.5599 & 0.3361\\
0.0814 & 0.0866 & 0.8320
\end{bmatrix}.
\]
\begin{table}[!b]
\vspace{-0.2cm}
\centering
\caption{\small Cue-source-specific wireless evaluation parameters.}
\label{tab:cue_source_parameters}
\small
\renewcommand{\arraystretch}{1.08}
\setlength{\tabcolsep}{2pt}
\begin{tabularx}{\columnwidth}
{@{}>{\raggedright\arraybackslash}X
*{5}{>{\centering\arraybackslash}X}@{}}
\hline
\textbf{Source} &
\textbf{Radius} &
\textbf{Bytes} &
$\boldsymbol{B_{\min}}$ &
$\boldsymbol{B_{\max}}$ &
$\boldsymbol{N^{\max}}$ \\
& (m) & & (ms) & (ms) & \\
\hline
Wearable
    & 0.5 & 32 & $5$--$15$  & $25$--$45$  & $2$--$3$ \\
Object$_1$
    & 3   & 16 & $8$--$20$  & $40$--$70$  & $2$--$4$ \\
Object$_2$
    & 3   & 16 & $8$--$20$  & $40$--$70$  & $2$--$4$ \\
Object$_3$
    & 3   & 16 & $8$--$20$  & $45$--$80$  & $2$--$4$ \\
Ambient$_1$
    & 6   & 8  & $10$--$25$ & $60$--$100$ & $3$--$5$ \\
Ambient$_2$
    & 6   & 8  & $10$--$25$ & $60$--$110$ & $3$--$5$ \\
\hline
\end{tabularx}
\end{table}

For each S4 instance, let $c_t$ denote the preceding coarsened activity
label in the same sequence. To avoid assuming perfect context knowledge,
a confusion probability $\nu$ is introduced: probability $1-\nu$ is
assigned to $c_t$, while $\nu$ is distributed uniformly over the remaining
$K-1$ classes. Propagating this belief through $\mathbf{T}$ gives the
context-conditioned prior $\boldsymbol{\lambda}_t^{(\nu)}$. We consider
$\nu=0.15$ and $\nu=0.40$ to represent relatively reliable and degraded
recent-context information, respectively.

\par The inference layer uses $M=6$ cue sources: wearable, object$_1$,
object$_2$, object$_3$, ambient$_1$, and ambient$_2$. Each source is
represented by window-level mean and standard-deviation features and mapped
to a discrete cue using a one-versus-all multiclass logistic classifier
with ridge regularisation. The compact source representations contain
$32$ components for the wearable source, $16$ for each object-oriented
source, and $8$ for each ambient source. Under 8-bit quantisation, these
correspond to payloads of $32$, $16$, and $8$ bytes, respectively.
Fig.~\ref{fig:cue_likelihood_matrices} shows the resulting empirical
cue-likelihood matrices. The wearable source is relatively discriminative
across the activity classes, whereas sources such as ambient$_1$ confuse
several classes and therefore provide weaker activity information.

\par The wireless layer is instantiated through a virtual indoor overlay
using Wi3R received-power records. Since OPPORTUNITY and Wi3R are not
co-recorded, cue-source identities are assigned independently of wireless
link quality. We generate $D=1000$ wireless realisations, with the
source-placement radii and source-specific readiness and retransmission
parameters summarised in Table~\ref{tab:cue_source_parameters}. For each
 realisation $d$ and source $m$, the received power and the
parameters $B_{\min,d,m}$, $B_{\max,d,m}$, and $N_{d,m}^{\max}$ are fixed.
The latter are sampled once from the corresponding ranges in
Table~\ref{tab:cue_source_parameters}, while the readiness time itself is
resampled for each replayed activity-inference instance.

\par The grant-free delivery model uses frames of duration $T_f=10$~ms,
a packet length of $L_{\mathrm{pkt}}=133$~bytes, and a receiver noise floor
of $-100$~dBm. The packet length conservatively accounts for the compact
source representation and protocol overhead. The effective packet-error
probability is obtained by averaging the IEEE~802.15.4-like packet-error
curve over a $5$~dB residual link-margin perturbation using
deterministic samples. For each activity instance and wireless
realisation, the resulting delivery and readiness processes determine the
admitted cue subset, whose likelihood factors are then used in the
posterior activity-event update.

\subsection{\Gls{eir} and Event Error versus Usefulness Horizon}
\label{subsec:activity_window_results}

\begin{figure}[!t]
    \centering
    \includegraphics[width=0.95\linewidth]{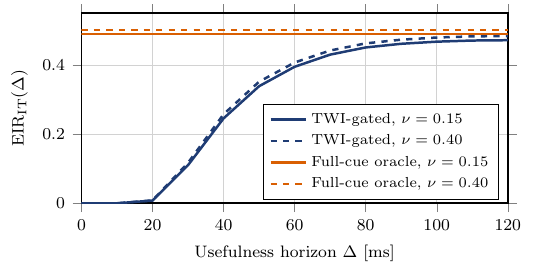}

    \vspace{0.25em}

    \includegraphics[width=0.95\linewidth]{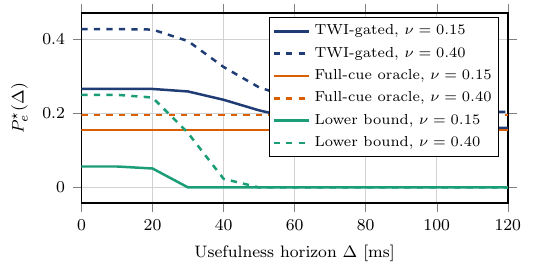}

    \caption{\small \gls{eir}, Bayes event error, and the corresponding
    entropy-based lower bound versus usefulness horizon $\Delta$.\vspace{-0.5cm}}
    \label{fig:activity_eir_vs_delta}
\end{figure}

We first evaluate how the usefulness horizon $\Delta$ affects
\gls{eir}. For each $\Delta$, the distribution of
admitted cue subsets is combined with the context prior and empirical cue
likelihoods to compute $\mathrm{EIR}(\Delta)$ and
$P_e^\star(\Delta)$. As shown in
Fig.~\ref{fig:activity_eir_vs_delta}, increasing $\Delta$ raises the
probability that additional cues become available in time, thereby reducing
residual event uncertainty and Bayes event error. The largest changes occur
over the range of $\Delta$ containing most of the cue-readiness and
retransmission-delay probability mass, after which the gains progressively
saturate.

\par The two context-noise settings illustrate how the value of admitted
evidence depends on the prior information already available at the decision
node. A smaller $\nu$ produces a more informative previous-activity context,
whereas a larger $\nu$ leaves greater prior uncertainty to be resolved by
the incoming cues. As $\Delta$ increases, both the \gls{eir} and Bayes event
error approach their corresponding full-cue benchmarks. Any remaining gap
is attributable to the wireless admission process, including packet drops
that cannot be recovered simply by extending the usefulness horizon $\Delta$.

\par The entropy-based lower bound follows the reduction in residual
uncertainty but is not expected to coincide with the Bayes event error. The
bound depends only on the residual entropy and $K$, whereas
the Bayes error depends on the full posterior distribution. It therefore
serves as an information-theoretic limit on evidence-limited event
performance rather than as a prediction of the achieved Bayes error.

\subsection{Event-Aware Cue Prioritisation}
\label{subsec:event_aware_priority}

We next examine the benefit of incorporating event-inference value into cue
prioritisation rather than relying on link-level timeliness alone. Unlike
the previous subsection, where every \gls{twi}-admissible cue can contribute
to the posterior update, we impose a priority budget $M_{\mathrm p}$ over
the $M=6$ cue sources. Fig.~\ref{fig:priority_gain_heatmap} considers
$M_{\mathrm p}=3$, while Fig.~\ref{fig:priority_curve} varies the budget to
show its effect on Bayes event error.

\par Wireless drops are grouped by the \gls{bs}--activity distance $d$, and
$\Delta$ is evaluated on a finer grid. To isolate the effect of
prioritisation, the retransmission cap is fixed at $N^{\max}=3$ for all
sources and drops. For each drop, the link-level rule selects the sources
with the largest \gls{twi}-admissibility probabilities, whereas the
event-aware rule selects the subset that minimises Bayes event error. The
resulting gain is computed per drop and averaged within each distance bin.

\par Fig.~\ref{fig:priority_gain_heatmap} shows the error reduction
when three of the six cue sources are prioritised. The largest gains occur
where $\Delta$ gives several cues appreciable admission probability but the
priority constraint still requires a choice among them. In this regime,
link-level prioritisation can favour timely but weakly informative or
redundant cues, whereas the event-aware rule favours combinations that
produce a more discriminative posterior.

\begin{figure}[!t]
    \centering
    \includegraphics[width=\columnwidth]{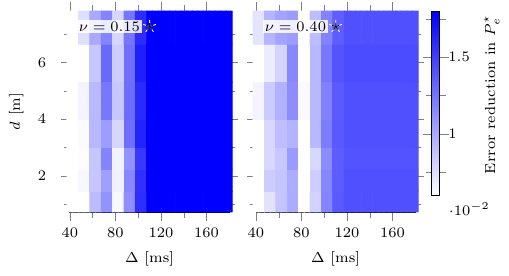}
    \caption{\small Reduction in Bayes event error from event-aware cue
    prioritisation with $M_{\mathrm p}=3$ over the
    distance--$\Delta$ plane. Darker regions indicate larger reductions
    relative to link-level prioritisation.\vspace{-0.4cm}}
    \label{fig:priority_gain_heatmap}
\end{figure}

\begin{figure}[!b]
\vspace{-0.2cm}
    \centering
    \includegraphics[width=0.95\linewidth]{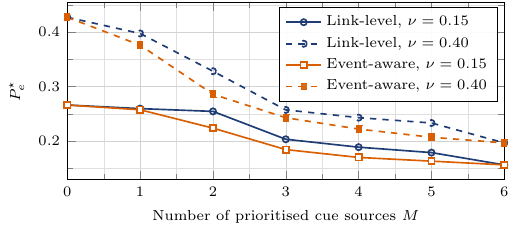}
    \caption{\small Bayes event error versus priority budget
    $M_{\mathrm p}$ at the highlighted distance--usefulness-horizon bin.}
    \label{fig:priority_curve}
\end{figure}

\par The gain also depends on $\nu$. With more reliable
previous-activity context, part of the event uncertainty is already
resolved, making cue value more sensitive to how well a subset
distinguishes the remaining plausible classes. With degraded context, a
broader range of cues can contribute useful evidence. Event-aware
prioritisation remains beneficial, but the distance--$\Delta$ pattern of
the gain changes because cue value is evaluated relative to a different
context-conditioned prior.

\par Fig.~\ref{fig:priority_curve} shows the effect of varying
$M_{\mathrm p}$ at the highlighted operating point. The two rules coincide
when no source or all sources are prioritised, since no subset-selection
choice remains. For intermediate budgets, event-aware prioritisation
achieves lower Bayes event error by selecting cues according to their
contribution to the event decision rather than their admissibility
probability alone.
\subsection{Event-Inference Coverage under a Cue-Reliability Budget}
\label{subsec:activity_coverage_results}

We evaluate event-inference coverage through the minimum $\Delta$
required to support a target Bayes event error under a given spatial
condition and cue-reliability budget. Fig.~\ref{fig:activity_coverage_profiles}
compares this horizon for link-level and event-aware reliability allocation
in the selected distance bin. The horizontal axis gives the network-level
unavailability budget $\epsilon_{\mathrm{net}}$, with both allocation rules
evaluated under the two context-noise levels.

\par For each $\epsilon_{\mathrm{net}}$, the link-level allocation determines
the minimum supportable horizon subject to
$\rho_m\leq r_m(\Delta)$ and
$\sum_m(1-\rho_m)\leq\epsilon_{\mathrm{net}}$.
This yields an event-blind cue-reliability allocation
$\boldsymbol{\rho}_{\mathrm{LL}}^\star$ and its corresponding Bayes event
error. The event-aware allocation uses the same unavailability budget but
redistributes cue reliability to attain the same Bayes event-error target
with the minimum required horizon. The reported values are averaged over
the readiness-time realisations used to estimate $r_m(\Delta)$ and then
over the wireless drops within the selected distance bin.

\begin{figure}[!t]
    \centering
    \includegraphics[width=0.95\columnwidth]{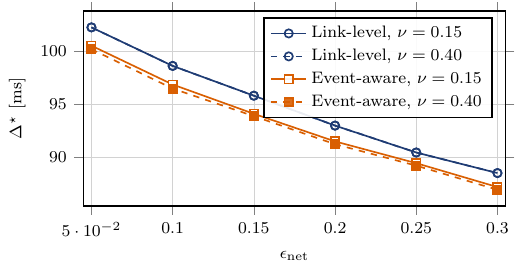}
    \caption{\small Minimum $\Delta$ at the
    $d\approx 8$~m distance bin versus network-level unavailability budget
    under link-level and event-aware cue-reliability allocation.\vspace{-0.5cm}}
    \label{fig:activity_coverage_profiles}
\end{figure}

\par As $\epsilon_{\mathrm{net}}$ increases, the minimum required horizon
decreases because greater aggregate cue unavailability is permitted. For
both context-noise levels, event-aware allocation attains the matched Bayes
event-error target with a shorter horizon than link-level allocation by
directing reliability towards cues with greater inferential value. The gain
is larger for $\nu=0.40$, where the degraded context leaves more event
uncertainty for the admitted cues to resolve; with $\nu=0.15$, the stronger
prior reduces this dependence on the cue set. Thus, event-inference coverage
depends not only on the aggregate reliability available to the network, but
also on how that reliability is distributed across cues with different
inferential value.
\section{Conclusion}
\label{sec:conclusion}
This work developed an \gls{eir} framework for wireless-enabled \gls{pai}
that characterises reliability through the uncertainty resolved about a
task-relevant physical event. The framework distinguishes cue
informativeness, availability, and \gls{twi}-based temporal admissibility,
and defines \gls{eier} and \gls{eir} from the context-conditioned event
uncertainty before and after cue admission. By relating residual uncertainty
to Bayes event error, it further distinguishes evidence-limited from
decoder-specific performance and provides an entropy-based lower bound on
achievable event error. The multimodal activity-inference instantiation
demonstrated how wireless delivery and the usefulness horizon shape the
evidence available for inference, while event-aware cue prioritisation and
reliability allocation improved upon link-level designs by directing
wireless resources towards cues with greater inferential value and
supporting matched event-inference performance at shorter usefulness
horizons. 
\bibliographystyle{ieeetr}
\bibliography{reference}
\end{document}